\newtheorem{proposition}{Proposition}
\newtheorem{lemma}{Lemma}
\newtheorem{example}{Example}
\begin{document}

\title{Buffers Improve the Performance of Relay Selection\\[-3mm]}

% author names and affiliations
% use a multiple column layout for up to three different
% affiliations
\author{\IEEEauthorblockN{Aissa Ikhlef, Diomidis S. Michalopoulos, and Robert Schober}
\IEEEauthorblockA{The University of British Columbia, Vancouver, Canada\\
E-mail: \{aikhlef,\,dio,\,rschober\}@ece.ubc.ca}\\[-8mm]}

\maketitle

\begin{abstract}
%\boldmath
We show that the performance of relay selection can be improved by employing relays with buffers. Under the idealized assumption that no buffer is full or empty, the best source-relay and the best relay-destination channels can be simultaneously exploited
by selecting the corresponding relays for reception and transmission, respectively. The resulting relay selection scheme is referred to as max-max relay selection (MMRS). Since for finite buffer sizes, empty and full buffers are practically unavoidable if MMRS is employed, we propose a hybrid relay selection (HRS) scheme, which is a combination of conventional best relay selection (BRS) and MMRS. We analyze the outage probabilities of MMRS and HRS and show that both schemes achieve the same diversity gain as conventional
BRS and a superior coding gain. Furthermore, our results show that for moderate buffer sizes (e.g.~30 packets) HRS closely approaches the performance of idealized MMRS and the performance gain compared to BRS approaches 3 dB as the number of relays
increases.
\end{abstract}

\IEEEpeerreviewmaketitle

\section{Introduction}
%=========================================================
%=========================================================
In cooperative networks with multiple relays \cite{Laneman_TIT04}, where a number of relays assist a source in transmitting information to a destination, relay selection techniques have gained a lot of interest \cite{Bletsas_KRL_JSAC06}. Relay selection is attractive because of its high performance, efficient use of power and bandwidth resources, and simplicity.  For example, simple relay selection schemes  \cite{Bletsas_KRL_JSAC06} can achieve the same diversity order as more complex cooperative schemes employing space time block coding \cite{Laneman_TIT03} or orthogonal channels \cite{Laneman_TIT04}. Many different schemes for single relay selection have been proposed in the literature, see e.g.~\cite{Bletsas_KRL_JSAC06, Bletsas_SW_COMLet07, Zhao_AL_TWC07, Dio_KTM_TWC08} and references therein. All these schemes have in common that the selected relay receives a packet from the source and retransmits it to the destination. The two most common schemes are the bottleneck and maximum harmonic mean based \textit{best relay selection} (BRS) schemes \cite{Bletsas_KRL_JSAC06} and their performances have been extensively studied in the literature \cite{Bletsas_KRL_JSAC06,Dio_K_TWC08,
Ikki_A_TCOM10, Ikki_A_EURASIP_JASP08}. Here, we adopt bottleneck based BRS \cite{Bletsas_KRL_JSAC06} as a benchmark for the proposed schemes.

In order to overcome the limitations imposed by using the same relay for reception and transmission, we propose to employ relays with buffers for applications that are not delay sensitive. If the relays have buffers, the relays with the best source-relay channel and the best relay-destination channel can be selected for reception and transmission, respectively. The corresponding selection scheme is referred to as \textit{max-max relay selection} (MMRS). For MMRS, we make the idealistic assumption that the buffer of the relay selected for reception (transmission) is not full (empty), which is only possible for buffers of infinite size. For the practical case of finite buffer sizes the buffer of a relay may become empty (full) if the channel conditions are such that the relay is selected repeatedly for transmission (reception) but not for reception (transmission).

To overcome this limitation, we propose a hybrid relay selection (HRS) scheme, which is a combination of conventional BRS and MMRS. In particular, for HRS, if the buffer of the relay selected for reception (transmission) is full (empty), BRS is employed; otherwise,
MMRS is used. Although both MMRS and HRS can be combined with both amplify-and-forward and decode-and-forward (DF) relaying, in this paper, we only consider DF relays and derive the corresponding outage probabilities. Analytical and simulation results
establish the superiority of MMRS and HRS compared to BRS.

We note that relays with buffers have been considered before in \cite{Xia_Fan_Thompson_Poor_TWC08} and \cite{Zlatanov_Schober_Globecom11} to improve the throughput of simple three-node networks consisting of a source, a destination, and a single relay. However, to the best of our knowledge, relays with buffers have not been considered in the context of relay selection before.

The remainder of this paper is organized as follows. In Section \ref{system_model}, the system model is presented, and MMRS and HRS are introduced. An outage analysis of MMRS and HRS is provided in Section \ref{Outage_analysis}. In Section \ref{results}, numerical results are presented, and conclusions are drawn in Section \ref{conclusion}.
%=========================================================
%=========================================================
\section{System model and relay selection}
\label{system_model}
%=========================================================
%=========================================================
In this section, we present the system model, briefly review BRS, and introduce the proposed MMRS and HRS schemes.
%%%%%%%%%%%%%%%%%%%%%%%%
\subsection{System Model}
%%%%%%%%%%%%%%
We consider a relay network with one source node, $S$, one destination node, $D$, and $N$ half-duplex DF relays, $R_1, \ldots,R_N$. Each relay is equipped with a buffer and transmission is organized in two time slots. In the first time slot, the
relay selected for reception receives a packet from the source node and stores it in its buffer. In the second time slot, the relay selected for transmission forwards a packet from its buffer to the destination node.

We assume that a direct link between the source and the destination does not exist or, if it does exist, it is not exploited for simplicity of implementation. Let $g_{i}$ and $h_{i}$, $i=1,\ldots,N$, denote the $S$-$R_i$ and $R_i$-$D$ channel gains, respectively.
We assume that the channel coefficients $g_{i}$ and $h_{i}$ are mutually independent zero-mean complex Gaussian random variables (Rayleigh fading) with variances $\sigma_{g_i}^2$ and $\sigma_{h_i}^2$, respectively. Moreover, we assume that the
transmission is organized in packets and the channels are constant for the duration of one packet and vary independently from one packet to the next (block fading model). This behavior can be achieved through frequency hopping between packets.
Let $\gamma_{g_i}\triangleq |g_i|^2\frac{E_s}{N_0}$ denote the instantaneous signal-to-noise ratio (SNR) between the source and relay $R_i$ and $\gamma_{h_i}\triangleq |h_i|^2\frac{E_s}{N_0}$ the instantaneous SNR between $R_i$ and the destination.
Here, $E_s$ is the energy available at the transmitting nodes and $N_0$ is the variance of the zero mean additive white Gaussian noise (AWGN) at the receiving nodes. $\gamma_{g_i}$ and $\gamma_{h_i}$ are exponentially distributed with parameters
$1/\overline{\gamma}_{g_i}$ and $1/\overline{\gamma}_{h_i}$, respectively, where $\overline{\gamma}_{g_i}\triangleq E[\gamma_{g_i}]=\sigma_{g_i}^2\frac{E_s}{N_0}$, $\overline{\gamma}_{h_i}\triangleq E[\gamma_{h_i}]=\sigma_{h_i}^2\frac{E_s}{N_0}$,
and $E[\cdot]$ denotes expectation.

We assume that the destination node has perfect channel state information (CSI) and selects the relays for transmission and reception. The destination node feeds back the information about the selected relays to all relays via an error-free feedback channel.
%
%=========================================================
\subsection{Best Relay Selection}
\label{BRS_scheme}
%=========================================================
The BRS scheme achieves full diversity by selecting one relay out of the $N$ available relays. This relay is then used for reception and transmission. The selected best relay, $R_b$, has the best bottleneck link \cite{Bletsas_KRL_JSAC06}, i.e,
\begin{equation}\label{eq_1.1}
    b\triangleq \mathrm{argmax}_{i=1,...,N}~~\min\{\gamma_{g_i},\gamma_{h_i}\}.
\end{equation}
%=========================================================
\subsection{Max-Max Relay Selection}
\label{MMRS_scheme}
%=========================================================
The relay selected according to the criterion in (\ref{eq_1.1}) may not simultaneously enjoy the best source-relay and the best relay-destination channels. If the relays are equipped with buffers, they can store the packets received from the source and do not have to re--transmit
them immediately in the next time slot. As a result, it is possible to use the relay with the best source-relay channel for reception and the relay with the best relay-destination channel for transmission. Thus, in the resulting MMRS scheme, the best relay for reception,
$R_{br}$, is selected based on
\begin{equation}\label{eq_1}
    br\triangleq \mathrm{argmax}_{i=1,...,N}~~\gamma_{g_i}
\end{equation}
and the best relay for transmission, $R_{bt}$, is selected according to
\begin{equation}\label{eq_2}
     bt \triangleq \mathrm{argmax}_{i=1,...,N}~~\gamma_{h_i}.
\end{equation}
For MMRS to work properly, the buffer of no relay can be empty or full at any time such that all relays have always the option of receiving and transmitting. Clearly, for buffers of finite size this may not be possible since a buffer may become empty (full) if a relay enjoys repeatedly
the best relay-destination (source-relay) link but never the best source-relay (relay-destination) link. To overcome this problem, in the next subsection, we combine MMRS with BRS.
%
%=========================================================
\subsection{Hybrid Relay Selection}
\label{HRS_scheme}
%=========================================================
If buffer over- and underflows are to be avoided, the relay selection criterion cannot only depend on the channel status as in MMRS but also has to take into account the status of the buffer. The basic idea is to use BRS if either the buffer of the relay selected for reception is full or the buffer of the relay selected for transmission is empty. In all other cases, MMRS is used. We assume that all buffers have $L_b$ elements and each element can store one packet. We denote the number of elements of relay $R_i$'s buffer that are full by $N_{e,i}$. For HRS, the best relay for reception, $R_{\overline{br}}$, is selected according to
\begin{equation}\label{eq_2.1}
    \overline{br}=\left\{
        \begin{array}{ll}
            b,& \mathrm{if}~N_{e,br}=L_b-1~\mathrm{or}~N_{e,bt}=0, \\
            br,&\mathrm{otherwise}, \\
        \end{array}
        \right.
\end{equation}
and the best relay for transmission, $R_{\overline{br}}$, is selected according to
\begin{equation}\label{eq_2.1.1}
    \overline{bt}=\left\{
        \begin{array}{ll}
            b,& \mathrm{if}~N_{e,br}=L_b-1~\mathrm{or}~N_{e,bt}=0, \\
            bt,&\mathrm{otherwise}, \\
        \end{array}
        \right.
\end{equation}
where $b$, $br$, and $bt$ are defined in (\ref{eq_1.1}), (\ref{eq_1}), and (\ref{eq_2}), respectively. In (\ref{eq_2.1}) and (\ref{eq_2.1.1}), we always leave one element of each buffer empty so that each relay is always able to receive in case it is selected for reception
in the BRS mode in the next transmission interval.

We note that for both MMRS and HRS, since different packets may be stored at different relays for different amounts of time, the packets transmitted by the source may arrive at the destination node in an order different from the order at the source node.
The original order can be restored at the destination node if the order information is contained in the preamble of the packet. Furthermore, MMRS and HRS introduce a delay in the network. This issue will be investigated in Section \ref{results}.
%
%=========================================================
%=========================================================
\section{Outage Probability Analysis }
\label{Outage_analysis}
%=========================================================
%=========================================================
In this section, we study the outage probability of MMRS and HRS with DF relays. The outage probability is defined as the probability that the output SNR, $\gamma_{b}$, falls below a certain SNR threshold, $\gamma\triangleq  2^{2R}-1$,
above which error-free transmission with rate $R$ is possible \cite{Simon_Alouini_B00}, i.e.,
\begin{equation}\label{eq_2.2}
   P_{out}\triangleq P(\gamma_{b}\leq\gamma),
\end{equation}
where $P(A)$ denotes the probability of event $A$.
Before we consider MMRS and HRS, we first briefly review the outage probability of BRS, which will be useful for computation of the outage probability of HRS.
%=============================================================
\subsection{Best Relay Selection}
\label{Outage_BRS}
%=============================================================
For BRS and DF relays, $\gamma_{b}\triangleq \max_i\{\min(\gamma_{g_i},\gamma_{h_i})\}$. Thus, based on (\ref{eq_2.2}), we obtain \cite{Dio_K_TWC08}
\begin{equation}\label{eq_2.3}
   P_{out}^{BRS}=\prod_{i=1}^{N}\left(1-\exp\left(-\frac{\gamma}{\bar{y}_i}\right)\right),
\end{equation}
where $\bar{y}_i\triangleq \left(1/\bar{\gamma}_{g_i}+1/\bar{\gamma}_{h_i}\right)^{-1}$.

At high SNR and assuming independent and identically distributed (i.i.d.) fading for both links, i.e.,  $\overline{\gamma}_{g_i}=\overline{\gamma}_{h_i}=\overline{\gamma}$ and $\bar{y}_i=\frac{\bar{\gamma}}{2}$, the outage probability can be simplified to
\begin{equation}\label{eq_2.4}
   P_{out}^{BRS}\approx\left(\frac{2\gamma}{\bar{\gamma}}\right)^N.
\end{equation}
Expressing the outage probability now in terms of the diversity gain $G_d$ and coding gain $G_c$, i.e., $ P_{out}\approx\left(G_c \bar{\gamma}/\gamma\right)^{-G_d}$ \cite{Tarokh_Seshadri_Calderbank_IT98}, we observe that the diversity gain of BRS is
$G_d^{BRS}=N$ and its coding gain is $G_c^{BRS}=1/2$.
%=============================================================
\subsection{Max-Max Relay Selection}
\label{Outage_MMRS}
%=============================================================
For MMRS and DF relays, we have $\gamma_{b}\triangleq\min\{\gamma_{g_b},\gamma_{h_b}\}$, where $\gamma_{g_b}\triangleq \max_{i=1\ldots N}\gamma_{g_i}$ and $\gamma_{h_b}\triangleq \max_{i=1\ldots N}\gamma_{h_i}$.
Hence, the outage probability is given by
\begin{align}\label{eq_3}
    P_{out}^{MMRS}&= P\big(\min \{\gamma_{g_b},\gamma_{h_b}\}\leq \gamma\big)\nonumber\\
    &\hspace*{-3mm}= 1-P\big(\gamma_{g_b}> \gamma\big)P\big(\gamma_{h_b}> \gamma\big)\nonumber\\
    &\hspace*{-3mm}= 1-\Big[1-P\big(\gamma_{g_b}\leq \gamma\big)\Big] \Big[1-P\big(\gamma_{h_b}\leq \gamma\big)\Big]\\
    &\hspace*{-3mm}= 1-\Big[1-\prod_{i=1}^{N}\Big(1-e^{-\frac{\gamma}{\overline{\gamma}_{g_i}}}\Big)\Big]
    \Big[1-\prod_{i=1}^{N}\Big(1-e^{-\frac{\gamma}{\overline{\gamma}_{h_i}}}\Big)\Big].\nonumber
\end{align}
In case of i.i.d.~fading for both links, i.e., $\overline{\gamma}=\overline{\gamma}_{g_i}=\overline{\gamma}_{h_i}$, $i=1,...,N$, (\ref{eq_3}) simplifies to
\begin{eqnarray}\label{eq_4}
    P_{out}^{MMRS}= 1-\Big[1-\Big(1-e^{-\frac{\gamma}{\overline{\gamma}}}\Big)^N\Big]^2.
\end{eqnarray}
If we assume furthermore that the SNR is high and use the approximation $1-e^{-x}\approx x$, $x\to 0$, we obtain
\begin{eqnarray}\label{eq_5}
    P_{out}^{MMRS}&\approx& \Big(2^{\frac{1}{N}}\frac{\gamma}{\overline{\gamma}}\Big)^{N}.
\end{eqnarray}
From (\ref{eq_5}), we observe that MMRS achieves a diversity gain of $G_d^{MMRS}=N$ and a coding gain of $G_c^{MMRS}=2^{-\frac{1}{N}}$. Thus, in contrast to BRS, the coding gain of MMRS increases with the number of relays.

Interestingly, BRS and MMRS have the same diversity gain. However, MMRS achieves a higher coding gain for $N\ge 2$ relays. For a large number of relays, MMRS yields an SNR gain of $\lim_{N\to\infty}10\log_{10}(G_c^{MMRS}/G_c^{BRS})=3$ dB compared to BRS.
%
%=============================================================
\subsection{Hybrid Relay Selection}
\label{Outage_HRS}
%=============================================================
For simplicity, for the analysis of HRS, we only consider the i.i.d.~fading case. Considering (\ref{eq_2.1}) and (\ref{eq_2.1.1}), the outage probability of HRS can be written as
\begin{eqnarray}\label{eq_8}
   P_{out}^{HRS}=P_{MMRS} P_{out}^{MMRS} +P_{BRS} P_{out}^{BRS},
\end{eqnarray}
where $P_{BRS}$ and $P_{MMRS}=1-P_{BRS}$ are the probabilities that BRS (i.e., $\overline{br}=\overline{bt}=b$) and  MMRS  (i.e., $\overline{br}=br$ and $\overline{bt}=bt$) are used in HRS, respectively. Since $P_{out}^{BRS}$ and $P_{out}^{MMRS}$ are already known
from the previous two subsections, we only have to compute $P_{BRS}$ (or $P_{MMRS}$) for evaluation of $P_{out}^{HRS}$.

Clearly, if all the buffers are either full or empty, BRS is used all the time and $P_{BRS}=1$ (and hence $P_{MMRS}=0$). In order to compute
$P_{BRS}$ for the more interesting case where at least one buffer is neither full nor empty,  we model the possible states of the buffers and the transitions between the states as a Markov chain. Let $S_i\triangleq X_{1}X_{2} \cdots X_{N}$ denote the $i$th state in the
Markov chain, where $X_{j}$, $j=1,\ldots,N$, represents the number of full elements in the $j$th buffer. Let $P_{BRS,i}$ denote the probability of using BRS in state $S_i$. Then, $P_{BRS}$ can be written as
\begin{eqnarray}\label{eq_9}
   P_{BRS}=\sum_{i=1}^{N_s} P_{BRS,i} P_{S_i},
\end{eqnarray}
where $P_{S_i}$ and $N_s$ denote the probability of being in state $S_i$ and the total number of states, respectively. Since the buffer size is finite and the total number of buffer elements across all relays that are full is constant,
each state has to meet the following two constraints
\begin{align}
  &\sum_{i=1}^{N} X_{i}=N_e, \label{eq_10.1}\\
  &0\leq X_{i}\leq L_b-1, ~~ i\in\{1,\ldots ,N\},  \label{eq_10.2}
\end{align}
where $N_e\triangleq \sum_{i=1}^{N}N_{e,i}$ is the total number of full elements of all buffers.

The probability of transition from one state to another state is $1/N^2$. This can be seen from the fact that for a two-hop relay network with $N$ relays there are $N^2$ possible selections of the relays for reception and transmission. Since we assume that all channels are i.i.d.,
the probability of each selection is $1/N^2$. Given that the status of the buffers changes only if the relays selected for reception and transmission are different, each transition from one state to another state corresponds to only one selection, resulting in a transition probability of
$1/N^2$. On the other hand, if any one of the $N$ available relays is selected for reception and transmission, the states of the buffers remain unchanged. Thus, there is more than one selection that allows the buffers to remain in the same state and the probability
that the buffer remains in the same state is generally larger than $1/N^2$.
\begin{proposition}\label{prp1} The state transition matrix $\mathbf{P}$ of the Markov chain that models the buffer states is a doubly stochastic matrix\footnote{A doubly stochastic matrix is a square matrix for which the sum of the elements in each of its rows and columns
is 1 \cite{Ibe_B08}.}.
\end{proposition}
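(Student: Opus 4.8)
The plan is to exploit the elementary fact that a stochastic matrix whose rows already sum to one is doubly stochastic as soon as it is symmetric: if $P_{ij}=P_{ji}$ for all $i,j$, then the $j$th column sum $\sum_i P_{ij}=\sum_i P_{ji}$ equals the $j$th row sum, which is one. Since $\mathbf{P}$ is by construction a Markov transition matrix, its rows automatically sum to one, so it suffices to prove that $\mathbf{P}$ is symmetric, where $P_{ij}$ denotes the probability of moving from state $S_i$ to state $S_j$.

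First I would describe the off-diagonal entries explicitly using the transition analysis preceding the proposition. A transition between two distinct states occurs only when the relays selected for reception and transmission differ, say reception relay $l$ and transmission relay $k$ with $k\neq l$; this relocates one packet out of buffer $k$ and into buffer $l$, so the target $S_j$ is obtained from $S_i$ by $X_k\mapsto X_k-1$ and $X_l\mapsto X_l+1$, all other components unchanged. In state $S_i$ this relocation is actually carried out under MMRS (rather than falling back to BRS) exactly when $X_l\le L_b-2$ and $X_k\ge 1$, and the corresponding ordered pair of distinct relays has probability $1/N^2$. Moreover, each admissible neighbour $S_j$ differs from $S_i$ in exactly two coordinates and hence identifies the pair $(l,k)$ uniquely, so $P_{ij}=1/N^2$ when $S_j$ is such a single-packet neighbour of $S_i$, $P_{ij}=0$ for all other $i\neq j$, and the diagonal entry $P_{ii}$ collects the remaining probability.

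The core of the argument is to show that this neighbour relation is symmetric, i.e.\ $P_{ij}=1/N^2$ implies $P_{ji}=1/N^2$. Suppose $S_i\to S_j$ moves a packet from buffer $k$ to buffer $l$, which requires $X_l\le L_b-2$ and $X_k\ge 1$ in $S_i$. The reverse transition $S_j\to S_i$ moves a packet from buffer $l$ back to buffer $k$; in $S_j$ the relevant components are $X_l+1$ and $X_k-1$, and the MMRS conditions for this reverse move, namely $X_k-1\le L_b-2$ and $X_l+1\ge 1$, reduce to $X_k\le L_b-1$ and $X_l\ge 0$, both of which hold for every feasible state by (\ref{eq_10.2}). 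Hence the reverse relocation is always feasible and is likewise executed under MMRS, giving $P_{ji}=1/N^2=P_{ij}$; as the diagonal and the zero entries are trivially symmetric, $\mathbf{P}$ is symmetric and therefore doubly stochastic.

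I expect the main obstacle to be precisely this verification that the boundary (overflow/underflow) condition defining the MMRS/BRS switch in (\ref{eq_2.1}) and (\ref{eq_2.1.1}) is preserved under reversing a packet move. This is where the design choice of always leaving one buffer element empty (so that a ``full'' buffer corresponds to $X_i=L_b-1$) together with the range constraint (\ref{eq_10.2}) is essential, since it guarantees that every forward neighbour transition admits a valid reverse transition of equal probability and thereby rules out asymmetries that could arise at the extreme buffer occupancies.
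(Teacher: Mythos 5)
Your proof is correct and takes essentially the same route as the paper's: rows sum to one because $\mathbf{P}$ is a transition matrix, every state-changing transition has probability $1/N^2$ and admits a reverse transition of the same probability, hence $\mathbf{P}$ is symmetric and therefore doubly stochastic. The only difference is that you explicitly verify that the reverse packet move is always feasible under the MMRS/BRS switching rule of (\ref{eq_2.1}) and (\ref{eq_2.1.1}) using the constraint (\ref{eq_10.2}), a step the paper merely asserts, so your write-up is a tightened version of the same argument.
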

\begin{proof} For any Markov chain $\sum_{j=1}^{N}p_{ij}=1$ holds, where $p_{ij}\triangleq [\mathbf{P}]_{ij}$ is the transition probability from state $S_i$ to state $S_j$. Furthermore, for the considered case, all
transition probabilities from one state to another state are equal to $1/N^2$. If there is a transition from state $S_i$ to state $S_j$, there is also a transition from state $S_j$ to state $S_i$ and the probability
of both transitions is $1/N^2$. Thus, the transition matrix $\textbf{P}$ is symmetric and $\sum_{i=1}^{N}p_{ij}=1$ holds. Hence, the transition matrix is doubly stochastic, and the proof is complete.
\end{proof}
\begin{lemma}[{\cite[Page 65]{Ibe_B08}}]\label{thm3} For a doubly stochastic transition matrix, the stationary distribution is uniform, i.e., all the states are equally likely. For an $N_s$-state Markov chain, the probability of being in state
$S_i$, $i=1,\ldots ,N_s$, is $P_{S_i}=\frac{1}{N_s}$, regardless the initial state.
\end{lemma}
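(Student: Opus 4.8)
The plan is to exhibit the uniform distribution as a stationary distribution by direct substitution, and then to invoke the ergodic theorem for finite Markov chains to conclude uniqueness and convergence from an arbitrary initial state. Using the row-vector convention, a stationary distribution is a vector $\boldsymbol{\pi}=(\pi_1,\ldots,\pi_{N_s})$ with nonnegative entries summing to one and satisfying $\boldsymbol{\pi}\mathbf{P}=\boldsymbol{\pi}$.

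First I would verify that the uniform vector $\boldsymbol{\pi}^\star$ with entries $\pi_i^\star=1/N_s$ is stationary. By Proposition~\ref{prp1}, $\mathbf{P}$ is doubly stochastic, so each column sums to one, i.e.\ $\sum_{i=1}^{N_s}p_{ij}=1$ for every $j$. The $j$th entry of $\boldsymbol{\pi}^\star\mathbf{P}$ is then $\sum_{i=1}^{N_s}\pi_i^\star p_{ij}=\frac{1}{N_s}\sum_{i=1}^{N_s}p_{ij}=\frac{1}{N_s}=\pi_j^\star$, so $\boldsymbol{\pi}^\star\mathbf{P}=\boldsymbol{\pi}^\star$. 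Since its entries are nonnegative and sum to one, $\boldsymbol{\pi}^\star$ is a valid stationary distribution; this single substitution, which uses nothing beyond the column-sum property of the preceding proposition, is the entire existence half of the claim.

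The remaining assertion is that this stationary distribution is unique and is approached regardless of the initial state. For this I would appeal to the ergodic theorem, which guarantees a unique $\boldsymbol{\pi}$ with $\lim_{n\to\infty}[\mathbf{P}^n]_{ij}=\pi_j$ once the chain is irreducible and aperiodic. Aperiodicity is immediate: the discussion preceding Proposition~\ref{prp1} shows that whenever a single relay is selected for both reception and transmission the state is unchanged, so every state carries a self-loop and the period is one. Irreducibility is the step I expect to be the main obstacle, as it rests on the specific structure of the state space rather than on generic facts. To settle it I would show that any two admissible occupancy vectors, i.e.\ integer vectors obeying $\sum_i X_i=N_e$ and $0\le X_i\le L_b-1$, are mutually reachable: a nontrivial transition selects distinct relays for reception and transmission, thereby transferring one unit of occupancy from one buffer to another while preserving the total $N_e$, and any admissible vector can be turned into any other by repeatedly moving a unit from a buffer above its target level to one below it, with every intermediate vector staying within the bounds. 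Connectivity of the resulting state graph then gives irreducibility, and together with aperiodicity it fulfils the hypotheses of the ergodic theorem, yielding $\boldsymbol{\pi}=\boldsymbol{\pi}^\star$ uniquely and convergence from any initial state.
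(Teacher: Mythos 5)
Your proof is correct, but it cannot be matched against a proof in the paper because the paper offers none: Lemma~\ref{thm3} is imported verbatim from the textbook citation \cite[Page 65]{Ibe_B08} and simply applied to the buffer-state chain. Your argument is therefore a genuinely different---and strictly more informative---route, and the extra substance is exactly what the citation glosses over. As stated in isolation, the claim is false for an arbitrary doubly stochastic matrix: taking $\mathbf{P}$ to be the identity (which is doubly stochastic), every distribution is stationary and the chain never forgets its initial state, so uniqueness and convergence ``regardless of the initial state'' require ergodicity hypotheses. Your verification that the paper's specific chain satisfies them is the real content of your write-up: aperiodicity follows from the self-loops (any realization with $br=bt$, or any BRS-mode step, leaves the state unchanged, so every state has self-loop probability at least $N\cdot\frac{1}{N^2}=\frac{1}{N}$), and irreducibility follows from your unit-transfer argument, which is sound---if $X_i>Y_i$ and $X_j<Y_j$ for admissible states $X\neq Y$, then $X_i>Y_i\geq 0$ and $X_j<Y_j\leq L_b-1$, so selecting relay $j$ for reception and relay $i$ for transmission is a legitimate MMRS-mode transition of probability $\frac{1}{N^2}$ that keeps the intermediate state within the constraints (\ref{eq_10.1}) and (\ref{eq_10.2}) and strictly reduces the distance to $Y$. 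What the paper's citation buys is brevity and reliance on a standard reference; what your proof buys is a closed logical loop, since it confirms that the cited result actually applies to the Markov chain constructed in Section~\ref{Outage_HRS}---a point the paper takes for granted, and one that a careful reader (noting, e.g., that double stochasticity alone does not force a unique stationary distribution) might reasonably question.
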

From \textit{Lemma} \ref{thm3}, (\ref{eq_9}) reduces to
\begin{eqnarray}\label{eq_10.3}
   P_{BRS}=\frac{1}{N_s}\sum_{i=1}^{N_s} P_{BRS,i}.
\end{eqnarray}
Since the computation of $P_{BRS,i}$ is difficult in the general case, we first consider an example to illustrate the main idea.
\begin{example}\label{exp1} Let us consider a relay network with $N=2$ relays and the buffer at each relay is of size $L_b=4$ and half of the buffer elements are full, i.e., $N_{e}=4$. Fig. \ref{fig_exp_net} depicts the block diagram of the network.
The states of the corresponding Markov chain have to satisfy the constraints in (\ref{eq_10.1}) and (\ref{eq_10.2}), i.e.,
\begin{eqnarray}\label{eq_10.4}
  X_1+X_2=4, ~~ X_1\leq 3,~~\mathrm{and}~~ X_2\leq 3.
\end{eqnarray}
\begin{figure}[t]%!htbp
    \centering
    \includegraphics[width=8cm,height=4cm]{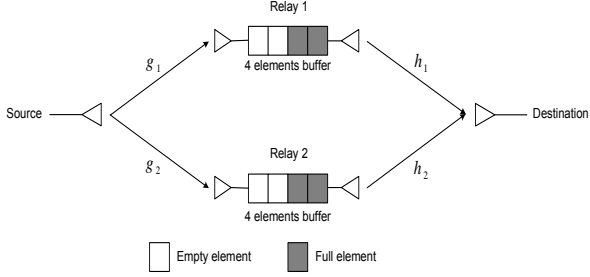}
    \caption{Relay network with one source, one destination, and two relays where each relay is equipped with a four-element buffer.}\label{fig_exp_net}
\end{figure}
\begin{figure}[t]%!htbp
    \centering
    \includegraphics[width=7.5cm,height=3.5cm]{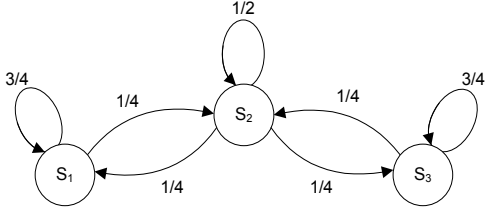}
    \caption{State diagram of the Markov chain representing the states of the buffers and the transitions between them.}\label{fig_exp_diag}
\end{figure}
Therefore, we have $N_s=3$ possible states for the Markov chain, $X_1X_2\in\{13,22,31\}$, and the probability of transition from one state to another is $\frac{1}{N^2}=\frac{1}{4}$. Let $S_1=13$, $S_2=22$, and $S_3=31$. The state diagram of this Markov chain is shown in Fig. \ref{fig_exp_diag}. The corresponding state transition matrix is given by
\begin{eqnarray}\label{eq_11}
  \mathbf{P}=\left[
    \begin{array}{ccc}
      3/4 & 1/4 & 0 \\
      1/4 & 1/2 & 1/4 \\
      0 & 1/4 & 3/4 \\
    \end{array}
  \right].
\end{eqnarray}
As expected, the transition matrix $\mathbf{P}$ is doubly stochastic. Therefore, from \textit{Lemma} \ref{thm3} the probability of each state is $\frac{1}{N_s}=\frac{1}{3}$.

Let us now define matrix
$\mathbf{D}\triangleq\left[
    \begin{array}{cc}
      a_1 & b_1  \\
      a_2 & b_2  \\
    \end{array}
  \right]$,
where the first and second column represents the relays selected for reception and transmission, respectively. If $a_i=1$, relay $R_i$ is selected for reception and if $b_i=1$, relay $R_i$ is selected for transmission, otherwise both $a_i$ and $b_i$ are zero.
Note that since only one relay is selected for reception and transmission, respectively, in each column of $\mathbf{D}$ only one element is equal to one and all other elements are zero. Hence, $\mathbf{D}$ can assume $N^2=4$ different values: $\mathbf{D}_1
\triangleq\left[ \begin{array}{cc} 1 & 1 \\ 0 & 0\\ \end{array}\right]$, $\mathbf{D}_2\triangleq\left[ \begin{array}{cc} 0 & 0 \\ 1 & 1\\ \end{array}\right]$, $\mathbf{D}_3\triangleq\left[ \begin{array}{cc} 1 & 0 \\ 0 & 1\\ \end{array}\right]$, and $\mathbf{D}_4\triangleq\left[
\begin{array}{cc} 0 & 1 \\ 1 & 0\\ \end{array}\right]$ and the probability of occurrence of each value is $\frac{1}{N^2}=\frac{1}{4}$. Now, we are ready to compute $P_{BRS,i}$ for each state.

\begin{enumerate}
  \item State $S_1=13$: For $\mathbf{D}_1$ and $\mathbf{D}_3$, based on (\ref{eq_2.1}) and (\ref{eq_2.1.1}) MMRS is used, and hence $P_{BRS,1}^{1}=P_{BRS,1}^{3}=0$. For $\mathbf{D}_2$ and $\mathbf{D}_4$, again based  on
  (\ref{eq_2.1}) and (\ref{eq_2.1.1}), BRS is used, and hence $P_{BRS,1}^{2}=P_{BRS,1}^{4}=\frac{1}{N^2}=\frac{1}{4}$. Summing up the probabilities of using BRS in state $S_1$, we obtain $P_{BRS,1}=\sum_{i=1}^{4}P_{BRS,1}^{i}=\frac{1}{2}$.
  \item State $S_2=22$: In this case, $P_{BRS,2}=0$, since the buffers are neither full nor empty.
  \item State $S_3=31$: This state is symmetric to state $S_1$. Thus, $P_{BRS,3}=\frac{1}{2}$.
\end{enumerate}

Finally, the total probability of using BRS is obtained as
\begin{eqnarray}\label{eq_13}
   P_{BRS}=\frac{1}{N_s}\sum_{i=1}^{N_s} P_{BRS,i}=\frac{1}{3}(\frac{1}{2}+\frac{1}{2})=\frac{1}{3}.
\end{eqnarray}
\end{example}
Let us now return to the general case.
\begin{proposition}\label{prp2} Let $N_{F,i}$, and  $N_{E,i}$ denote the number of full and empty buffers for state $S_i$, respectively. Note that the buffer of relay $R_i$ is considered full if $N_{e,i}=L_b-1$. Then, the probability of using BRS in state $S_i$ is given by
\begin{eqnarray}\label{eq_14}
   P_{BRS,i}&=&\frac{1}{N^2}\left((N_{F,i}+N_{E,i})N-N_{F,i}N_{E,i}\right).
\end{eqnarray}
%where $ \mathbf{1}_{N_{F,i}+N_{E,i}}=\left\{
%        \begin{array}{l}
%            1,~~ \mathrm{if}~N_{F,i}+N_{E,i}> 0  \\
%            0,~~\mathrm{otherwise} \\
%        \end{array}
%        \right.$
\end{proposition}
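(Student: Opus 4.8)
The plan is to compute $P_{BRS,i}$ by a direct counting argument over the $N^2$ equiprobable relay selections, closed by inclusion--exclusion. As established in the text preceding the proposition, under i.i.d.\ fading the relay $br$ selected for reception and the relay $bt$ selected for transmission via the MMRS criteria in (\ref{eq_1}) and (\ref{eq_2}) are each uniformly distributed over $\{1,\ldots,N\}$; since they are determined by the independent channel sets $\{\gamma_{g_i}\}$ and $\{\gamma_{h_i}\}$, the pair $(br,bt)$ takes each of its $N^2$ possible values with probability $1/N^2$. It therefore suffices to count how many of these $N^2$ pairs trigger the BRS branch in (\ref{eq_2.1}) and (\ref{eq_2.1.1}).

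First I would read off the triggering condition directly from (\ref{eq_2.1}): in state $S_i$, BRS is used precisely when $N_{e,br}=L_b-1$ (the reception relay's buffer is full) \emph{or} $N_{e,bt}=0$ (the transmission relay's buffer is empty). I would then introduce the two events $A=\{br\text{ is a full buffer}\}$ and $B=\{bt\text{ is an empty buffer}\}$ over the uniform choice of $(br,bt)$. The number of pairs in $A$ is $N_{F,i}\cdot N$, since $br$ may be any of the $N_{F,i}$ full relays while $bt$ ranges freely over all $N$ relays; symmetrically $|B|=N\cdot N_{E,i}$.

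Next I would resolve the overlap. The pairs in $A\cap B$ are exactly those with $br$ a full buffer \emph{and} $bt$ an empty buffer, of which there are $N_{F,i}\cdot N_{E,i}$. Inclusion--exclusion then yields the number of BRS-triggering selections as $|A\cup B|=(N_{F,i}+N_{E,i})N-N_{F,i}N_{E,i}$, and multiplying by the per-selection probability $1/N^2$ gives (\ref{eq_14}).

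The argument is essentially bookkeeping, so I do not anticipate a deep obstacle; the one point demanding care is that a buffer can never be simultaneously full and empty (for $L_b\ge 2$ the full and empty relay sets are disjoint, since full means $N_{e,i}=L_b-1\ge 1$ while empty means $N_{e,i}=0$), which is what guarantees that $A$ and $B$ are counted consistently and that $|A\cap B|$ factorizes cleanly as $N_{F,i}N_{E,i}$ rather than double-counting a shared relay. As a sanity check I would test the formula against Example~\ref{exp1}: for $S_1=13$ one has $N_{F,1}=1$ and $N_{E,1}=0$, so $P_{BRS,1}=\frac{1}{4}\big((1+0)\cdot 2-1\cdot 0\big)=\frac{1}{2}$, matching the value obtained there by enumeration.
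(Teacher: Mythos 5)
Your proposal is correct and follows essentially the same argument as the paper: counting, over the $N^2$ equiprobable pairs $(br,bt)$, those that trigger the BRS branch of (\ref{eq_2.1}) and (\ref{eq_2.1.1}); the paper merely adds the disjoint counts $N_{F,i}N+(N-N_{F,i})N_{E,i}$ where you invoke inclusion--exclusion, which is the same bookkeeping. One minor remark: the factorization $|A\cap B|=N_{F,i}N_{E,i}$ needs no disjointness of the full and empty relay sets, since the two conditions constrain the independent coordinates $br$ and $bt$ separately (indeed the formula still holds for $L_b=1$, where it correctly gives $P_{BRS,i}=1$).
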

\begin{proof}
As mentioned before, for $N$ relays, there are $N^2$ possible selections of the relays for reception and transmission and the probability of each selection is $1/N^2$. Therefore, according to (\ref{eq_2.1}) and (\ref{eq_2.1.1}), if we have $N_{F,i}$ full buffers in state $S_i$, there are $N_{F,i}N$ selections for which BRS is used. Moreover, if sate $S_i$ has $N_{E,i}$ empty buffers, from the remaining $N^2-N_{F,i}N$ possible selections, there are $(N-N_{F,i})N_{E,i}$ selections in which BRS is used. Therefore, since each selection has a probability of occurrence of $1/N^2$, the probability of using BRS in state $S_i$ is given by: $ P_{BRS,i}=\frac{1}{N^2}\left(N_{F,i}N+(N-N_{F,i})N_{E,i}\right)=\frac{1}{N^2}\left((N_{F,i}+N_{E,i})N-N_{F,i}N_{E,i}\right).$ This concludes the proof.
%The second term on the right hand side of (\ref{eq_14}) comes from the fact that there are $N$ possibilities that the same relay is selected for reception and transmission, and since the probability of using BRS for each possibility is $\frac{1}{2N^2}$ (since BRS is used only if
%the buffer of the relay is empty or full), we have $\frac{N}{2N^2}=\frac{1}{2N}$. The first term on the right hand side of (\ref{eq_14}) is due to the number of possibilities, out of $(N-1)N$, that either the buffer of the relay selected for reception is full or the
%buffer of the relay selected for transmission is empty or both, where each possibility has a probability of $\frac{1}{N^2}$.
\end{proof}
%=============================================================

\textit{Computation of $N_s$, $N_{F,i}$, and $N_{E,i}$}: For computation of $P_{BRS}$, the number of possible buffer states, $N_s$, has to be determined. It does not seem possible to obtain a general formula for $N_s$ valid for
any $N$, $L_b$, and $N_{e}$. However, for $N=2$ and $N=3$, the number of states can be calculated in closed form. In particular, for $N=2$, we obtain
\begin{equation}\label{eq_15}
    N_s=\left\{
        \begin{array}{l}
            N_e+1,~~ \mathrm{if}~N_e\leq L_b-1  \\
            2L_b-N_e-1,~~\mathrm{otherwise} \\
        \end{array}
        \right.
\end{equation}
and, for $N=3$, we have
\begin{equation}\label{eq_16}
    N_s=\sum_{i=(N_e-L_b)^++1}^{N_e+1}\Big(i-2(i-L_b)^+\Big)^+,
\end{equation}
where $(x)^+\triangleq \max\{x,0\}$.

For the general case, for a given total number of full buffer elements, $N_e$, and a given size of the buffers, $L_b$, the number of states can be obtained algorithmically as the number of all possible combination of $X_1X_2...X_N$ that satisfy (\ref{eq_10.1}) and (\ref{eq_10.2}).

Given the states of the Markov chain, it is easy to obtain the number of empty and full buffers, $N_{E,i}$ and $N_{F,i}$, for each state $S_i$, $i=1,\ldots,N_s$.
Thus, $P_{BRS}$ can be computed based on (\ref{eq_10.3}) and (\ref{eq_14}).

\textit{Computation of $P_{out}^{HRS}$}: Given $P_{out}^{BRS}$ (\ref{eq_2.3}), $P_{out}^{MMRS}$ (\ref{eq_3}), and $P_{BRS}$, $P_{out}^{HRS}$ can be computed using (\ref{eq_8}).

In the asymptotic case of high SNR, the outage probability of HRS can be expressed as
\begin{eqnarray}\label{eq_8.1}
   P_{out}^{HRS}\approx\left(\left(2P_{MMRS}+2^N P_{BRS}\right)^\frac{1}{N} \frac{\gamma}{\overline{\gamma}}\right)^{N}.
\end{eqnarray}
Therefore, the diversity gain of HRS is $G_d^{HRS}=N$ and the coding gain is $G_c^{HRS}=\left(2P_{MMRS} +2^N P_{BRS}\right)^{-\frac{1}{N}}$, i.e., the coding gain of HRS increases with increasing $N$ and increasing $P_{MMRS}$.
%
%=========================================================
%=========================================================
\section{Numerical results}
\label{results}
%=========================================================
%=========================================================
In this section, we assess the performance of the proposed MMRS and HRS schemes and compare it with that of BRS \cite{Bletsas_KRL_JSAC06}. Throughout this section, a target rate $R=1 \mathrm{bit/sec/Hz}$ is assumed for outage probability calculation. Furthermore, we consider the i.i.d.~case where all the channel coefficients are modeled as zero-mean complex Gaussian random  variables with variance $\sigma_{g_i}^2=\sigma_{h_i}^2=1$, and, hence $\overline{\gamma}_{g_i}=\overline{\gamma}_{h_i}=\overline{\gamma}= \frac{E_s}{N_0}=\mathrm{SNR}$.
\begin{figure}[t]%!htbp
    \centering
    \includegraphics[width=9.5cm,height=7.5cm]{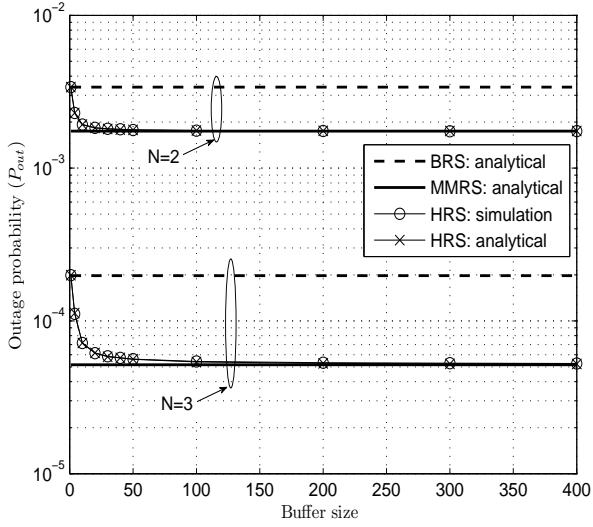}
    \vspace*{-7mm}
    \caption{Outage probability vs.~buffer size, $L_b$, for $N=2$ and $N=3$ relays. Half of the buffer elements are full, i.e., $N_e=\lceil NL_b/2\rceil$, SNR = 20 dB, and the target rate is $R=1 \mathrm{bit/sec/Hz}$. The analytical results were obtained from (\ref{eq_2.3}), (\ref{eq_3}), and (\ref{eq_8}).}\label{fig_Pout_Buf_size}
\end{figure}

Fig.~\ref{fig_Pout_Buf_size} shows the outage probability vs.~the buffers size, $L_b$, for $N=2$ and $N=3$ relays and SNR = 20 dB. We assume that half of the buffer elements are full, i.e., $N_e=\lceil NL_b/2\rceil$, where $\lceil x\rceil$ denotes the smallest
integer larger than or equal to $x$. Fig. \ref{fig_Pout_Buf_size} shows that for buffers of size $L_b=1$ HRS behaves like BRS. As the buffer size increases, the performance of HRS converges to that of MMRS. In fact, for $L_b\geq10$ and $L_b\geq30$ HRS achieves
a performance very similar to that of MMRS for $N=2$ and $N=3$, respectively. We also note that the analytical results obtained based on the derivation in Section \ref{Outage_analysis} are in perfect agreement with the simulation results.

In Fig.~\ref{fig_Pout_Buf_Numb_full_elem}, we investigate the impact of the average number of full buffer elements per relay on the outage probability of HRS. We consider buffers with $L_b=100$ elements and assume again SNR = 20 dB. As can be observed,
the best performance is achieved if the buffers are half full on average, i.e., the total number of full buffer elements is $N_e=\lceil NL_b/2\rceil$. In this case, the probability of having empty or full buffers is minimized and the probability that MMRS is used is maximized.

Fig.~\ref{fig_Pout_SNR_N} depicts the outage probability of MMRS, HRS, and BRS vs.~the average SNR for various numbers of relays. For HRS, the buffer size is $L_b=30$ and half of the buffer elements are full. As expected, all considered relay selection schemes
achieve a diversity gain of $G_d=N$. However, the coding gain advantage of MMRS and HRS increases with increasing number of relays. For $N=1$, MMRS and HRS are identical to BRS (in fact, no selection takes place in this case). For $N=2$, 3, and 5, the
asymptotic SNR gain of MMRS compared to BRS is 1.5 dB, 2.0 dB, and 2.4 dB, respectively. The gap between MMRS and HRS increases slightly with increasing $N$ indicating that the buffer size has to increase with increasing $N$ to keep the gap between
MMRS and HRS constant.

\begin{figure}[t]
    \centering
    \includegraphics[width=9.5cm,height=7.5cm]{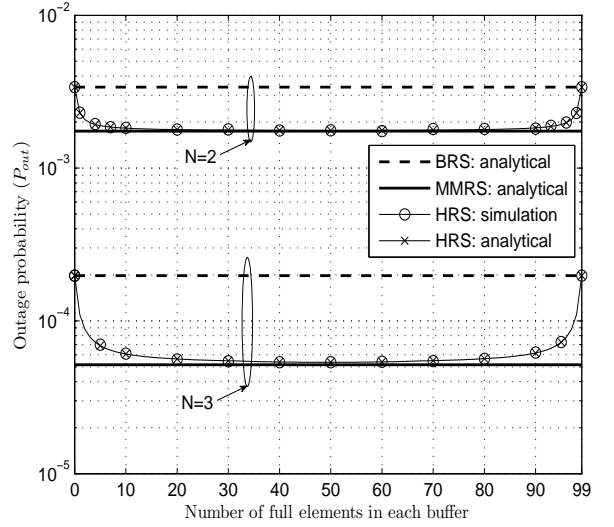}
      \vspace*{-7mm}
    \caption{Outage probability vs.~average number of full buffer elements per relay for $N=2$ and $N=3$.  $L_b=100$, SNR = 20 dB, and the target rate is $R=1 \mathrm{bit/sec/Hz}$. The analytical results were obtained from
    (\ref{eq_2.3}), (\ref{eq_3}), and (\ref{eq_8}).} \label{fig_Pout_Buf_Numb_full_elem}
\end{figure}

In Fig.~\ref{fig_Delay_Buf_Size}, we investigate the effect of the relay buffer size, $L_b$, on the average delay introduced by HRS. Half of the buffer elements are full, i.e., $N_e=\lceil NL_b/2\rceil$, and SNR = 15 dB. The delay is zero if the packet is not stored at the relay
and reaches the destination in two consecutive hops as in BRS. The delay in Fig.~\ref{fig_Delay_Buf_Size} corresponds to the number of transmission interval durations (corresponding to two hops and consequently two packet durations) that a packet is stored at a relay
before it arrives at the destination, i.e., a delay of one means that the packet is stored at the relay in one transmission interval and retransmitted in the next. Fig.~\ref{fig_Delay_Buf_Size} shows that, as expected, the average delay increases with increasing buffer size.
In fact, for sufficiently large buffer sizes the average delay can be approximated as $NL_b/2$. Thus, considering the results in Figs.~\ref{fig_Pout_Buf_size} and \ref{fig_Pout_SNR_N}, delays of less than 50 (100) transmission intervals are sufficient for HRS to closely
approach the performance of MMRS for $N=3$ ($N=5$) relays.
\begin{figure}[t]%!htbp
    \centering
    \includegraphics[width=9.5cm,height=7.5cm]{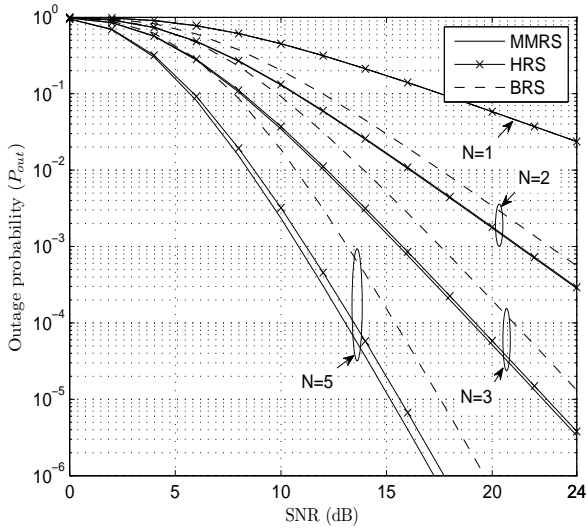}
      \vspace*{-7mm}
    \caption{Outage probability of MMRS, HRS, and BRS vs.~average SNR for different numbers of relays. For HRS, the buffer size is $L_b=30$ and half of the buffer elements are full, i.e., $N_e=\lceil NL_b/2\rceil$. The target rate is $R=1 \mathrm{bit/sec/Hz}$. Analytical results
    obtained from (\ref{eq_2.3}), (\ref{eq_3}), and (\ref{eq_8}) are shown.}\label{fig_Pout_SNR_N}
\end{figure}
%=========================================================
%=========================================================
\section{Conclusion}
\label{conclusion}
%=========================================================
%=========================================================
In this paper, we proposed two new relay selection schemes for relays with buffers. The first scheme, MMRS, always selects the relays with the best source-relay and the best relay-destination channels for reception and transmission, respectively, and operates under
the assumption that the buffers at the relays are neither full nor empty. Since this assumption is not practical for finite buffers, we proposed a second scheme, HRS, which employs MMRS if the buffer of the relay selected for reception is not full and the buffer of the
relay selected for transmission is not empty, and conventional BRS otherwise. We have analyzed the outage probability of MMRS and HRS and established that while they have the same diversity gain as BRS, they achieve a coding gain advantage of up to 3 dB.
More importantly, we showed that, for $N=3$ relays, HRS can achieve a coding gain advantage of 2 dB compared to BRS if an average delay of 50 transmission intervals can be afforded.

Finally, we note that relays with buffers add additional flexibility to cooperative diversity systems. While, in this paper, we used this flexibility to improve the performance of relay selection, exploring other scenarios (e.g.~interference
avoidance) where relays with buffers may be advantageous in cooperative networks is an interesting topic for future work.
\begin{figure}[tbp]
    \centering
    \includegraphics[width=9.5cm,height=7.5cm]{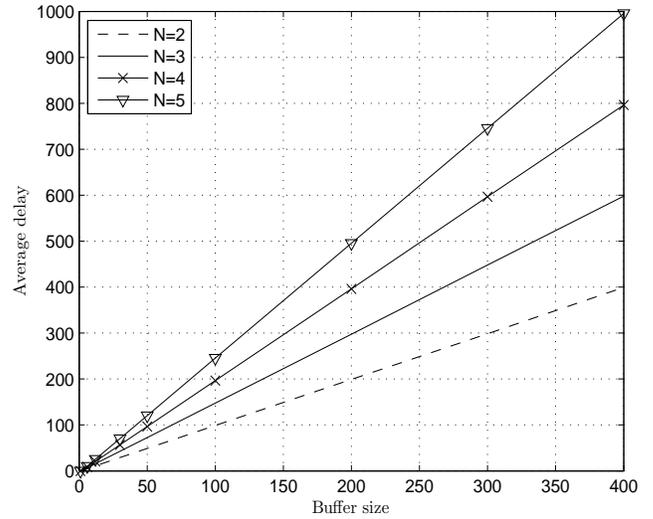}
      \vspace*{-7mm}
    \caption{Average delay vs.~buffer size $L_b$ for HRS. Half of the buffer elements are full, i.e., $N_e=\lceil NL_b/2\rceil$, SNR = 15 dB, and the target rate is $R=1 \mathrm{bit/sec/Hz}$. Simulation results are shown.}\label{fig_Delay_Buf_Size}
\end{figure}
%=========================================================
%=========================================================
\bibliographystyle{IEEEtran}
\bibliography{refs}

% Generated by IEEEtran.bst, version: 1.12 (2007/01/11)
\begin{thebibliography}{10}
\providecommand{\url}[1]{#1}
\csname url@samestyle\endcsname
\providecommand{\newblock}{\relax}
\providecommand{\bibinfo}[2]{#2}
\providecommand{\BIBentrySTDinterwordspacing}{\spaceskip=0pt\relax}
\providecommand{\BIBentryALTinterwordstretchfactor}{4}
\providecommand{\BIBentryALTinterwordspacing}{\spaceskip=\fontdimen2\font plus
\BIBentryALTinterwordstretchfactor\fontdimen3\font minus
  \fontdimen4\font\relax}
\providecommand{\BIBforeignlanguage}[2]{{%
\expandafter\ifx\csname l@#1\endcsname\relax
\typeout{** WARNING: IEEEtran.bst: No hyphenation pattern has been}%
\typeout{** loaded for the language `#1'. Using the pattern for}%
\typeout{** the default language instead.}%
\else
\language=\csname l@#1\endcsname
\fi
#2}}
\providecommand{\BIBdecl}{\relax}
\BIBdecl

\bibitem{Laneman_TIT04}
J.~N. Laneman, D.~N.~C. Tse, and G.~W. Wornell, ``Cooperative diversity in
  wireless networks: efficient protocols and outage behavior,'' \emph{IEEE
  Trans. Inf. Theory}, vol.~50, pp. 3062--3080, Dec. 2004.

\bibitem{Bletsas_KRL_JSAC06}
A.~Bletsas, A.~Khisti, D.~Reed, and A.~Lippman, ``A simple cooperative
  diversity method based on network path selection,'' \emph{IEEE J. Select.
  Areas Commun.}, vol.~24, pp. 659--672, Mar. 2006.

\bibitem{Laneman_TIT03}
J.~N. Laneman and G.~W. Wornell, ``Distributed space-time-coded protocols for
  exploiting cooperative diversity in wireless networks,'' \emph{IEEE Trans.
  Inf. Theory}, vol.~49, pp. 2415--2425, Nov. 2003.

\bibitem{Bletsas_SW_COMLet07}
A.~Bletsas, H.~Shin, and M.~Z. Win, ``Outage optimality of opportunistic
  amplify-and-forward relaying,'' \emph{IEEE Communications Letters}, vol.~11,
  pp. 261--263, Mar. 2007.

\bibitem{Zhao_AL_TWC07}
Y.~Zhao, R.~Adve, and T.~J. Lim, ``Improving amplify-and-forward relay
  networks: optimal power allocation versus selection,'' \emph{IEEE Trans.
  Wireless Commun.}, vol.~6, pp. 3114--3123, Aug. 2007.

\bibitem{Dio_KTM_TWC08}
D.~S. Michalopoulos, G.~K. Karagiannidis, T.~A. Tsiftsis, and R.~K. Mallik,
  ``Distributed transmit antenna selection (\uppercase{DTAS}) under performance
  or energy consumption constraints,'' \emph{IEEE Trans. Wireless Commun.},
  vol.~7, pp. 3168--1173, Apr. 2008.

\bibitem{Dio_K_TWC08}
D.~S. Michalopoulos and G.~K. Karagiannidis, ``Performance analysis of
  single-relay selection in \uppercase{R}ayleigh fading,'' \emph{IEEE Trans.
  Wireless Commun.}, vol.~7, pp. 3718--3724, Oct. 2008.

\bibitem{Ikki_A_TCOM10}
S.~Ikki and M.~H. Ahmed, ``Performance analysis of adaptive decode-and-forward
  cooperative diversity networks with best-relay selection,'' \emph{IEEE Trans.
  Commun.}, vol.~58, pp. 68--72, Jan. 2010.

\bibitem{Ikki_A_EURASIP_JASP08}
------, ``Performance of multiple-relay cooperative diversity systems with best
  relay selection over \uppercase{R}ayleigh fading channels,'' \emph{EURASIP
  Journal on Advances in Signal Processing}, vol. 2008, pp. Article ID
  580\,368, 7 pages, 2008. doi:10.1155/2008/580368.

\bibitem{Xia_Fan_Thompson_Poor_TWC08}
B.~Xia, Y.~Fan, J.~Thompson, and H.~V. Poor, ``Buffering in a three-node relay
  network,'' \emph{IEEE Trans. Wireless Commun.}, vol.~7, no.~11, pp.
  4492--4496, Nov. 2008.

\bibitem{Zlatanov_Schober_Globecom11}
N.~Zlatanov, R.~Schober, and P.~Popovski, ``Throughput and diversity gain of
  buffer-aided relaying,'' Submitted to IEEE Globecom, Mar. 2011.

\bibitem{Simon_Alouini_B00}
M.~K. Simon and M.-S. Alouini, \emph{Digital Communication over Fading
  Channels: A Unified Approach to Performance Analysis}.\hskip 1em plus 0.5em
  minus 0.4em\relax NY: John Wiley $\&$ Sons, 2000.

\bibitem{Tarokh_Seshadri_Calderbank_IT98}
V.~Tarokh, N.~Seshadri, and A.~R. Calderbank, ``Space-time codes for high data
  rate wireless communication: performance criterion and code construction,''
  \emph{IEEE Trans. Inform. Theory}, vol.~44, pp. 744--765, Mar. 1998.

\bibitem{Ibe_B08}
O.~C. Ibe, \emph{Markov Processes for Stochastic Modeling}.\hskip 1em plus
  0.5em minus 0.4em\relax Elsevier, 2008.

\end{thebibliography}
\end{document}